\documentclass{llncs}

\usepackage{amsmath, amssymb, mathrsfs, bm, latexsym,graphics,color, graphicx,url,floatflt}

\begin{document}

\title{An exact algorithm for the weighed mutually exclusive maximum set cover problem}

\author{Songjian Lu and Xinghua Lu}

\institute{Department of Biomedical Informatics,\\ University of
Pittsburgh, Pittsburgh, PA 15219, USA\\ Email: songjian@pitt.edu,
xinghua@pitt.edu}

\maketitle

\begin{abstract}
In this paper, we  introduce an exact algorithm with a time
complexity of $O^*(1.325^m)^{\dag}$
\let\thefootnote\relax\footnotetext{$^{\dag}${\bf Note:} Following
the recent convention, we use a star $*$ to represent that the
polynomial part of the time complexity is neglected.} for the {\sc
weighted mutually exclusive maximum set cover} problem,  where $m$
is the number of subsets in the problem. This  is an  NP-hard 
motivated and abstracted from a bioinformatics problem of identifying signaling
pathways based gene mutations.   Currently, this probelm is addressed using 
heuristic algorithms, which cannot guarantee the performance of
the solution.  By providing a relatively efficient exact algorithm, 
our approach will like increase the capability of finding better
solutions in the application of cancer research.
\end{abstract}

\section{Introduction}
Cancers are genomic diseases in that genomic perturbations, such as mutation
of genes, lead to perturbed cellular signal pathways, which in turn
lead to uncontrolled cell growth.  An important cancer research area 
is to discover perturbed signal transduction pathways in cancers, in 
order to gain insights in disease mechanisms and guide patient treatment.
It has observed that mutation events among the genes constitute a signaling pathway
tend to be occur in a mutually exclusive fashion \cite{sparks1998,TCGA_gbm}.  This
is because often one mutation in such a pathway may be 
usually sufficient to disrupt the signal carried by a pathway leading to cancers.
Contemporary biotechnologies can easily detect what genes have mutated in
tumor cells, providing an unprecedented opportunity to study cancer signaling  pathways. 
However, as each tumor usually has up to hundreds
of  mutations, some dispersed in different pathways driving tumor genesis while others mutations
not related to cancers, it is a challenge to find mutations across different patients that affect a  common 
cancer signaling pathway.    The property of mutual exclusivity of mutations in a common
pathway can help us to recognize  driver mutations within a common pathway~\cite{Ciriello,Miller,Vandin2012}. 

The problem of finding mutations within a common pathway across tumors, i.e., finding the members of the pathway,   can be cast as follows:
finding a set of mutually exclusive mutations that cover a maximum
number of tumors.  This is an NP-hard (this problem is abstracted to the
{\sc mutually exclusive maximum set cover} problem), and previous studies~\cite{Ciriello,Miller,Vandin2012} used heuristic algorithms to solve the problem, which could not guarantee 
the optimal solutions. Another shortcoming of the previous studies is that they do not 
consider the weight of the mutations.  Since the signal carried by a signaling pathway is 
often reflected as a phenotype, statistical methods can be used to assign a weight to a type of gene mutation by assessing the strength of association of the mutation event and appearance of a phenotype of interest.  Therefore,
it is more biologically interesting to find a set of mutually exclusive mutations that carries 
as much weight as possible and covers as many tumors as possible---thus
a {\sc weighted mutually exclusive maximum set cover} problem.  

{\sc mutually exclusive maximum set cover} problem is: given a
ground set $X$ of $n$ elements, a collection ${\cal F}$ of $m$
subsets of $X$, try to find a sub-collection ${\cal F'}$ of ${\cal
F}$ with minimum number of subsets such that 1) no two subsets in
${\cal F'}$ are overlapped and 2) ${\cal F'}$ covers the maximum
number of elements in $X$, i.e. the number of elements of the
union of all subsets in ${\cal F'}$ is maximized. If we assign
each subset in ${\cal F}$ a weight (a real number) and further
require that the weight of ${\cal F'}$, i.e. the weight sum of
subsets in ${\cal F'}$, is minimized, then the {\sc mutually
exclusive maximum set cover} problem becomes the {\sc weighted
mutually exclusive maximum set cover} problem.

The research on the {\sc mutually exclusive maximum set cover} and
the {\sc weighted mutually exclusive maximum set cover} problems
is limit. To our best knowledge, only Bj\"olund et al.
~\cite{bjorklund} gave an algorithm of $O^*(2^n)$ for the problem
of finding $k$ subsets in ${\cal F}$ with maximum weight sum that
cover all elements in $X$ (the solution may not exists). The {\sc
mutually exclusive maximum set cover} problem is obtained by
adding constrains to the {\sc set cover} problem, which is a
well-known NP-hard problem in Karp's 21 NP-complete
problems~\cite{Karp1972}. Much research about the {\sc set cover}
problem has been focused on the approximation algorithms, such as
papers ~\cite{alon,feige,Kolliopoulos,lund} gave polynomial time
approximation algorithms that find solutions whose sizes are at
most $c\log n$ times the size of the optimal solution, where $c$
is a constant. There is also plenty research about the {\sc
hitting set} problem, which is equivalent to the {\sc set cover}
problem. In this direction, people mainly designed fixed-parameter
tractable (FPT) algorithms that used the solution sizes $k$ as
parameter for the {\sc hitting set} problem under the constrain
that sizes of all subsets in the problem are bounded by $d$. For
example, Niedermeier et al. \cite{niedemedier} gave a
$O^*(2.270^k)$ algorithm for the {\sc $3$-hitting set} problem,
and Fernau et al.~\cite{fernau_2} gave a $O^*(2.179^k)$ algorithm
respectively. Very recently, people also studied the extension
version of the {\sc set cover} problem that find a sub-set ${\cal
F'}$ of ${\cal F}$ such that each element in $X$ is covered by at
least $t$ subsets in ${\cal F'}$. For example, Hau et
al.~\cite{Hua2} designed an algorithm with time complexities of
$O^*((t+1)^n)$ for the problem; Lu et al.~\cite{Lu2011} further
improved the algorithm under the constrain that there are certain
elements in $X$ are included in at most $d$ subsets in ${\cal F}$.
These two algorithms can be easily modified to solved the {\sc
weighted mutually exclusive maximum set cover} problem. However,
as in application, $n$, the number of tumor samples, is large (can
be several hundreds). Above two algorithms are not practical. On
the other hand, by excluding somatic mutations that are less
possible to be related to a pathway in the study, the number of
mutations is usually less than the number of tumors. Hence, there
is a need to design better algorithms solving the {\sc weighted
mutually exclusive maximum set cover} problem and using $m$ the
number of subsets (mutations) in ${\cal F}$ as parameter.

In this paper, first, we will prove that the {\sc weighted
mutually exclusive maximum set cover} problem is NP-hard. Then, we
will give an algorithm of running time bounded by $O^*(1.325^m)$
for the problem. This running time complexity is only the worst
case upper bound. In our test, this algorithm could solve the
problem practically when it was applied to the TCGA
data~\cite{TCGA_gbm} for searching the diver mutations.

\section{The {\sc weighted mutually exclusive maximum set cover} problem is NP-hard}
The formal definition of the {\sc weighted mutually exclusive
maximum set cover} problem is: given a ground set $X$ of $n$
elements, a collection ${\cal F}$ of $m$ subsets of $X$, and a
weight function $w: {\cal F} \rightarrow [0, \infty)$, if ${\cal
F'} =\{S_1,S_2,\ldots,S_h\} \subset {\cal F}$ such that
$|(\cup_{i=1}^hS_i)|$ is maximized, and $S_i\cap S_j=\emptyset$
for any $i \neq j$, then we say ${\cal F'}$ is a mutually
exclusive maximum set cover of $X$ and $\sum_{i=1}^hw(S_i)$ is the
weight of ${\cal F'}$; the goal of the problem is to find a
mutually exclusive maximum set cover of $X$ with the minimum
weight.

In this section, we will prove that the {\sc mutually exclusive
maximum set cover} problem, i.e. all subsets in ${\cal F}$ have
equal weight, is NP-hard, which would in turn prove that the {\sc
weighted mutually exclusive maximum set cover} problem is NP-hard.

We will prove the NP-hardness of the {\sc mutually exclusive
maximum set cover} problem by reducing another NP-hard problem,
the {\sc maximum $3$-set packing} problem, to it. Recall that the
{\sc maximum $3$-set packing} problem is: given a collection
${\cal F}$ of subsets, where the size of each subset in ${\cal F}$
is $3$, try to find an ${\cal S} \subset {\cal F}$ such that
subsets in ${\cal S}$ are pairwise disjoint and $|{\cal S}|$ is
maximized.

\begin{theorem}\label{theorem1}
The {\sc mutually exclusive maximum set cover} problem is NP-hard.
\end{theorem}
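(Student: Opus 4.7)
The plan is to reduce the NP-hard Maximum 3-Set Packing problem to the Mutually Exclusive Maximum Set Cover (MEMSC) problem. Given a 3-set packing instance consisting of a collection $\mathcal{F}$ of size-3 subsets of a ground set $X$, I would use the same $X$ and $\mathcal{F}$ verbatim as the MEMSC instance, so the reduction itself is an identity map and clearly runs in polynomial time.

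The correctness relies on a simple counting observation. Any feasible MEMSC solution $\mathcal{F}' \subseteq \mathcal{F}$ consists of pairwise disjoint subsets, each of size exactly $3$, so $|\bigcup_{S \in \mathcal{F}'} S| = 3|\mathcal{F}'|$. Hence maximizing the number of covered elements is equivalent to maximizing the number of subsets chosen, which is precisely the 3-set packing objective. This also neutralizes the ``minimum number of subsets'' clause in the MEMSC definition: since the number of chosen subsets equals one third of the coverage on these instances, maximizing coverage and minimizing subset count coincide rather than conflict. Formally, passing to decision versions, the 3-set packing instance admits a packing of size at least $k$ if and only if the MEMSC instance admits a mutually exclusive subcollection covering at least $3k$ elements, giving a polynomial-time reduction from an NP-hard problem. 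NP-hardness of the weighted variant then follows immediately by assigning all subsets a common weight.

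I do not anticipate a real obstacle, since the reduction is structurally trivial and the argument rests on the single identity $|\bigcup \mathcal{F}'| = 3|\mathcal{F}'|$. The one point worth double-checking is that the bi-criteria formulation of MEMSC (maximize coverage, then minimize number of subsets used) does not spoil the equivalence, and this is handled by the alignment of the two criteria observed above for size-3 disjoint families.
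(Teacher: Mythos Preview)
Your proposal is correct and follows essentially the same route as the paper: an identity reduction from \textsc{Maximum 3-Set Packing}, with $X=\bigcup_{S\in\mathcal{F}} S$ and $\mathcal{F}$ unchanged. Your write-up is in fact more careful than the paper's, which simply declares the equivalence ``obvious'' without spelling out the $|\bigcup\mathcal{F}'|=3|\mathcal{F}'|$ identity or addressing the bi-criteria tie-breaking.
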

\begin{proof}
Let ${\cal S} = \{S_1,S_2,\ldots,S_m\}$ be an instance of the {\sc
maximum $3$-set packing} problem. We create an instance of the
{\sc mutually exclusive maximum set cover} problem such that $X =
\cup_{i=1}^mS_i$ and ${\cal F} = {\cal S}$.

It is obvious that ${\cal P} = \{P_1,P_2,\ldots, P_k\}$ is a
solution of the {\sc mutually exclusive maximum set cover} problem
if and only if ${\cal P} = \{P_1,P_2,\ldots, P_k\}$ is a solution
of the {\sc maximum $3$-set packing} problem. Thus, the {\sc
mutually exclusive maximum set cover} problem is NP-hard. \qed
\end{proof}

\section{The main Algorithm}

In this section, we will introduce our main algorithm. The basic
idea of our method is branch and bound. The algorithm first finds
a subset in ${\cal F}$ and then branches on it. By the mutual
exclusivity, if any two subsets in ${\cal F}$ are overlapped, then
at most one of them can be chosen into the solution. Hence,
suppose that the subset $S$ intersects with other $d$ subsets in
${\cal F}$, then if $S$ is included into the solution, $S$ and
other $d$ subsets intersected with $S$ will be removed from the
problem, and if $S$ is excluded from the solution, $S$ will be
removed from the problem. We continue this process until the
resulting sub-problems can be solved in constant or polynomial
time.

The execution process of the algorithm is going through a search
tree and the running of the algorithm is proportional to the
number of leaves in the search tree. If letting $T(m)$ be the
number of leaves of search tree when call the algorithm with $m$
subsets in ${\cal F}$, then we can obtain the recurrence relation
$T(m) \leq T(m -(d+1)) + T(m-1)$. As if $d=0$, the problem can be
solved in polynomial time (all subsets in ${\cal F}$ will be
included into the solution), $d\geq 1$. Therefore, we can obtain
$T(m) \leq 1.619^m$, which means the problem can be solved in
$O^*(1.619^m)$ time$^{\ddag}$
\let\thefootnote\relax\footnotetext{$^{\ddag}${\bf Note:} Given a
recurrence relation $T(k) \leq \sum_{i=0}^{k-1}c_iT(i)$ such that
all $c_i$ are nonnegative real numbers, $\sum_{i=0}^{k-1}c_i>0$,
and $T(0)$ represents the leaves, then $T(k) \leq r^k$, where $r$
is the unique positive root of the characteristic equation $t^k -
\sum_{i=0}^{k-1}c_it^i=0$ deduced from the recurrence
relation~\cite{chen}.}. In this paper, we will improve the running
time to solve the problem by carefully selecting subsets in ${\cal
F}$ for branching.

Before present our major result, we prove three lemmas. Given an
instance $(X, {\cal F}, w)$ of the {\sc weighted mutually
exclusive maximum set cover} problem, we make a graph $G$ called
the set interaction graph such that each subset in ${\cal F}$
makes a node in $G$ and if any two subsets are interacted, an edge
is added between them.

For the convenience, in the rest of paper, we will use a node in
the intersection graph and a subset in ${\cal F}$ in a mixed way.
Suppose $C=(V_c, E_c)$ is a connected component of $G$, we denote
$(X,{\cal F},w)_C$ the sub-instance induced by component $C$, i.e.
$(X,{\cal F},w)_C = (\cup_{S\in V_c}S,V_c,w)$. In the algorithm,
when we say $Solution_1$ is better than $Solution_2$ if 1)
$Solution_1$ covers more elements in $X$ than $Solution_2$ covers,
or 2) $Solution_1$ and $Solution_2$ cover the same number of
element in $x$, however the weight of $Solution_1$ is less than
the weight of $Solution_2$. In the intersection graph,
$neighbor(S)$ includes $S$ and all nodes that are connected to
$S$.

The first lemma will show that we can find the solution of the
problem by finding the solutions of all sub-instances induced by
connected components of the intersection graph $G$.

\begin{lemma}\label{lemma1}
Given an instance $(X, {\cal F}, w)$ of the {\sc weighted mutually
exclusive maximum set cover} problem, if the intersection graph
obtained from the instance consists of several connected
components, then the solution of the problem is the union of
solutions of all sub-instances induced by connected components.
\end{lemma}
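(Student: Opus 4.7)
The plan is to prove the lemma by showing two directions: any optimal solution for the whole instance decomposes into optimal solutions on each component, and conversely, taking the union of per-component optima yields an optimal solution for the whole instance. Both directions will rest on a single structural observation about the intersection graph.

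First I would record the key disjointness fact: if $C_1,\ldots,C_t$ are the connected components of $G$ with vertex sets $V_{C_1},\ldots,V_{C_t}$, then for any $S\in V_{C_i}$ and $S'\in V_{C_j}$ with $i\neq j$ we must have $S\cap S'=\emptyset$, because otherwise the construction of $G$ would have put an edge between $S$ and $S'$, contradicting that they lie in different components. Consequently the element sets $X_i=\bigcup_{S\in V_{C_i}}S$ are pairwise disjoint.

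Next I would fix any mutually exclusive sub-collection ${\cal F'}\subseteq{\cal F}$ and write ${\cal F'}_i={\cal F'}\cap V_{C_i}$. Each ${\cal F'}_i$ is automatically pairwise disjoint (inherited from ${\cal F'}$), and by the disjointness of the $X_i$ the coverage splits additively, $|\bigcup_{S\in{\cal F'}}S|=\sum_{i=1}^t|\bigcup_{S\in{\cal F'}_i}S|$, and likewise the weight splits additively, $w({\cal F'})=\sum_{i=1}^t w({\cal F'}_i)$. Hence ${\cal F'}$ is "better" (in the lexicographic coverage-then-weight sense defined earlier) exactly when each ${\cal F'}_i$ is simultaneously best for its own sub-instance $(X,{\cal F},w)_{C_i}$. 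Conversely, given optimal solutions ${\cal F}^*_i$ for each sub-instance, their union ${\cal F}^*=\bigcup_i{\cal F}^*_i$ is mutually exclusive (since within a component each ${\cal F}^*_i$ is, and across components the sets are disjoint by the structural observation), its coverage is $\sum_i|\bigcup_{S\in{\cal F}^*_i}S|$, and its weight is $\sum_i w({\cal F}^*_i)$.

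The only subtlety, and the part I would be most careful about, is the bi-criterion nature of the "better than" relation: optimality here means first maximizing coverage and then, among maximum-coverage solutions, minimizing weight. I would therefore argue in two stages, first showing that an assembled solution $\bigcup_i{\cal F}^*_i$ achieves the componentwise-maximum coverage (so no globally feasible ${\cal F'}$ can cover more, since its restrictions ${\cal F'}_i$ cover at most as much as ${\cal F}^*_i$ on each component), and then that among all such maximum-coverage assemblies the weight is minimized precisely when each ${\cal F}^*_i$ has minimum weight. This matches the definition of a solution given in Section~2 and completes the equivalence, establishing the lemma.
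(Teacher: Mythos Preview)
Your proposal is correct and follows the same approach as the paper, which simply observes that subsets in different components share no elements and hence the sub-instances can be solved independently. Your version is considerably more detailed—particularly in handling the lexicographic coverage-then-weight optimality—where the paper's proof is essentially a two-sentence sketch.
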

\begin{proof}
As the subset(s) in each sub-instance has(have) no element(s) in
other sub-instance(s), we can solve each sub-instance
independently. It obvious that the optimal solutions of all
sub-instances will make the optimal solution of the original
instance. \qed
\end{proof}

In next lemma, we will show that if the maximum degree of the
intersection graph obtained from the given instances is bounded by
$2$, i.e. each subset in the instance is overlapped with at most
other $2$ subsets, then the problem can be solved in polynomial
time.

\begin{figure}[htb]
\footnotesize
\begin{tabbing}
xxxx\=xx\=xx\=xx\=xx\=xx\=xx\=xx\=xx\=xx\=xx\=xx\=xx\=\kill
{\bf WMEM-Cover-2$((X,{\cal F},w))$}\\
\textbf{Input:} An instance of the {\sc weighted mutually exclusive maximum set cover}  \\
\>\> problem such that the degree of the interaction graph is bounded by $2$.\\
\textbf{Output:} A mutually exclusive maximum set cover with minimum weight.\\

1 \> {\bf if} $X=\emptyset$ or ${\cal F} = \emptyset$ {\bf then}\\
1.1 \>\> {\bf return} $\emptyset$;\\
2 \> Find all connected components in the intersection graph and save them in $Comp$;\\
3.\> {\bf if} the number of components is larger than $2$ {\bf then} \\
3.1 \>\>\ {\bf return} $\bigcup_{C \in Comp}$ {\bf WMEM-Cover-2$((X,{\cal F},w)_C)$}; \\
\>\> // $(X,{\cal F},w)_C$ represents the sub-instance induced by component $C$.\\
\\
\> {\bf else}\\
3.2 \>\> find the node $x$ such that $x$ is the middle node if the intersection graph\\
\>\>~~is a path or $x$ is any node if the intersection graph is a ring; \\
3.3 \>\> $Solution_1 = \{x\} \cup$ {\bf WMEM-Cover-2$((X-x,{\cal F}-neighbor(x),w))$};\\
\>\> // The $neighbor(x)$ includes $x$ and all nodes that are
connected to $x$.\\
\\
3.4 \>\> $Solution_2 = $ {\bf WMEM-Cover-2$((X,{\cal F}-x,w))$};\\
3.5 \>\> {\bf return} the best solution among $Solution_1$ and
$Solutoin_2$;\\
\>\> // The best solution either covers more elements in $X$ than
other solutions cover or\\
\>\>~ ~ has minimum weight if all solutions cover the same number
of elements in $X$.
\end{tabbing}
\vspace*{-3mm}
\caption{Algorithm for the {\sc weighted mutually exclusive
maximum set cover} problem with overlapped degrees bounded by
$2$.} \label{Algorithm_2}
\end{figure}

\begin{lemma}\label{lemma2}
Given an instance $(X, {\cal F}, w)$ of the {\sc weighted mutually
exclusive maximum set cover} problem, if the degree of its
intersection graph is bounded by $2$, then the problem can be
solved in $O(m^2)$ time.
\end{lemma}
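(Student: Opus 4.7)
The plan is to prove both the correctness and the $O(m^2)$ running time of the algorithm \textbf{WMEM-Cover-2} displayed in Figure~\ref{Algorithm_2}. The starting step is the structural observation that when every subset intersects at most two others, the intersection graph $G$ is a disjoint union of isolated vertices, simple paths, and simple cycles. Lemma~\ref{lemma1} immediately reduces the analysis to a single such component, which is exactly what Steps~2 and~3.1 of the algorithm exploit; isolated vertices are handled as trivial one-node components that are unconditionally added to the solution.

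For correctness on a single path or cycle, I would argue by induction on $m$ with a standard branch-and-bound exchange. Let $x$ be the node chosen in Step~3.2. In any mutually exclusive set cover, either $x$ is selected or it is not. In the first case the mutual-exclusivity constraint forbids every other node of $\mathit{neighbor}(x)$ from being selected, so the remainder of an optimum coincides with an optimum of the subinstance $(X-x,\,{\cal F}-\mathit{neighbor}(x),\,w)$, which is $Solution_1$. In the second case $x$ can be deleted from ${\cal F}$ without affecting feasibility or objective, yielding $Solution_2$. Step~3.5 keeps the lexicographically better of the two under the (coverage,\,$-$weight) rule that defines ``better solution'' in the paper, so the induction propagates optimality to the returned solution.

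The main technical point is the running time, and this is where choosing the \emph{middle} node of a path is essential. On a path of $m$ vertices, deleting $\mathit{neighbor}(x)$ splits the remaining graph into two paths of sizes $\lceil m/2\rceil-2$ and $\lfloor m/2\rfloor-1$, while deleting only $x$ splits it into two paths of sizes $\lceil m/2\rceil-1$ and $\lfloor m/2\rfloor$. Writing $T(m)$ for the running time on a single component of size $m$ and charging only a linear non-recursive overhead per call, this gives
\begin{equation*}
T(m)\;\leq\;4\,T(m/2)+O(m),
\end{equation*}
so summing work level by level in the recursion tree (level $k$ has at most $4^k$ subproblems each of size $\le m/2^k$, contributing total size $2^k m$) yields $T(m)=O(m^2)$. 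The cycle case reduces after one branching to path subinstances of sizes $m-3$ and $m-1$ and therefore inherits the same $O(m^2)$ bound. Summing $O(m_i^2)$ over the components is again $O(m^2)$.

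The most delicate step, in my view, is the recurrence rather than the correctness: one has to be sure that after each branch the algorithm really invokes itself on the two well-separated paths so that Lemma~\ref{lemma1} kicks in, and that the overhead of (re-)identifying the connected components in Step~2 is kept linear per call so that the level-summation argument actually delivers $O(m^2)$ and not $O(m^2\log m)$.
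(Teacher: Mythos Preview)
Your proposal is correct and follows essentially the same approach as the paper: reduce to a single path or cycle via Lemma~\ref{lemma1}, branch on the middle node of a path (or any node of a cycle), observe that each branch splits the instance into two halves, and obtain the recurrence $T(m)\le 4T(m/2)$ giving $O(m^2)$. The only cosmetic difference is that the paper lets $T(m)$ count search-tree leaves and argues $T(m)\le 4^{\log m}=m^2$ directly, whereas you track per-call overhead explicitly and appeal to a level-by-level summation; both routes yield the same bound.
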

\begin{proof}
We first prove that if the intersection graph has only one
connected component, the running time of the algorithm {\bf
WMEM-Cover-2} is polynomial.

As the degree of the intersection graph is bounded by $2$, the
connected component can only be a simple path or a simple ring.

Case 1: Suppose that the intersection graph is a simple path. The
algorithm first finds the middle node (subset) $x$ of the path;
then branches on $x$ such that branch one includes the node into
the solution (three subsets will be removed from the problem) and
branch two excludes the node from the solution (one subset will be
removed from the problem). Hence, if $T(m)$ represents the number
of leaves in the search tree, we will have
\[T(m) \leq T(m-3) +
T(m-1).\] Furthermore, considering that after the branching, the
resulting intersection graphs will be split into two connected
components with almost equal sizes, we have \[T(m)\leq (T(\lceil
(m-3)/2\rceil) + T(\lfloor(m-3)/2\rfloor)) +
(T(\lceil(m-1)/2\rceil) + T(\lfloor(m-1)/2\rfloor)) < 4T(m/2).\]
From this recurrence relation, we will have \[T(m) \leq 4^{\log m}
= m^2.\]

Case 2: Suppose that the intersection graph is a simple ring. The
algorithm chooses any node and branches on it. Similar to case 1,
one branch will remove three subsets from the problem while other
branch will remove one subset from the problem. Hence, we will
have the recurrence relation \[T(m) \leq T(m-3) +  T(m-1).\]
Furthermore, after this operation, the resulting intersection
graphs in both branches are simple pathes. So with the analysis of
case 1, we can obtain \[T(m) \leq (m-3)^2 + (m-1)^2 < 2m^2.\]

If the intersection graph of the instance has multiple connected
components, then by Lemma~\ref{lemma1}, we can solve sub-instances
induced by connected components independently. As each
sub-instance induced by a connected component can be solved in
polynomial time, the original instance can be solved in polynomial
time. It is easy to obtain that the running time is bounded by
$O(m^2)$.

The correctness of the algorithm is straightforward. The algorithm
{\bf WMEM-Cover-2} first chooses a node in the intersection graph,
then branches on it. One branch includes the node into the
solution while the other branch excludes the node from the
solution. Hence, all possible combinations of mutually exclusive
covers are considered and the algorithm will returns the best
solution, i.e. the solution that covers maximum number of elements
in $X$ and has the minimum weight. \qed
\end{proof}

\begin{figure}[htb]
\footnotesize
\begin{tabbing}
xxxx\=xx\=xx\=xx\=xx\=xx\=xx\=xx\=xx\=xx\=xx\=xx\=xx\=\kill
{\bf WMEM-Cover-3$((X,{\cal F},w))$}\\
\textbf{Input:} An instance of the {\sc weighted mutually exclusive maximum set cover}  \\
\>\> problem such that the degree of the interaction graph is bounded by $3$.\\
\textbf{Output:} A mutually exclusive maximum set cover with minimum weight.\\
\\

1 \> {\bf if} $X=\emptyset$ or ${\cal F} = \emptyset$ {\bf then}\\
1.1 \>\> {\bf return} $\emptyset$;\\
2 \> Find all connected components in the intersection graph and save them in $Comp$;\\
3.\> {\bf if} the number of components is larger than $2$ {\bf then} \\
3.1 \>\>\ {\bf return} $\bigcup_{C \in Comp}$ {\bf WMEM-Cover-3$((X,{\cal F},w)_C)$}; \\
\> {\bf else}\\
3.2 \>\> find subset $x$ with maximum degree in the intersection graph;\\
3.3 \>\> {\bf if} the degree of $x$ is at most $2$ {\bf then}\\
3.3.1 \>\>\> {\bf return WMEM-Cover-2$((X,{\cal F},w))$};\\
\>\> {\bf else}\\
3.3.2 \>\>\> find the node $x$ such that $x$ is the first node with degree $3$ that is connected\\
\>\>\> ~ to a node with minimum degree in the intersection graph or $x$ is any node  \\
\>\>\> ~ if degrees of all nodes in the intersection graph are $3$;\\
3.3.3 \>\>\> $Solution_1 = \{x\} \cup$ {\bf WMEM-Cover-3$((X-x,{\cal F}-neighbor(x),w))$};\\
3.3.4 \>\>\> $Solution_2 = $ {\bf WMEM-Cover-3$((X,{\cal F}-x,w))$};\\
3.3.5 \>\>\> {\bf return} the best solution among $Solution_1$ and
$Solutoin_2$;

\end{tabbing}
\vspace*{-3mm}
\caption{ The main algorithm for the {\sc weighted mutually
exclusive maximum set cover} problem with overlapped degrees
bounded by $3$.} \label{Algorithm_3}
\end{figure}

\begin{figure}[ht]
\centering \scalebox{0.7}{\includegraphics{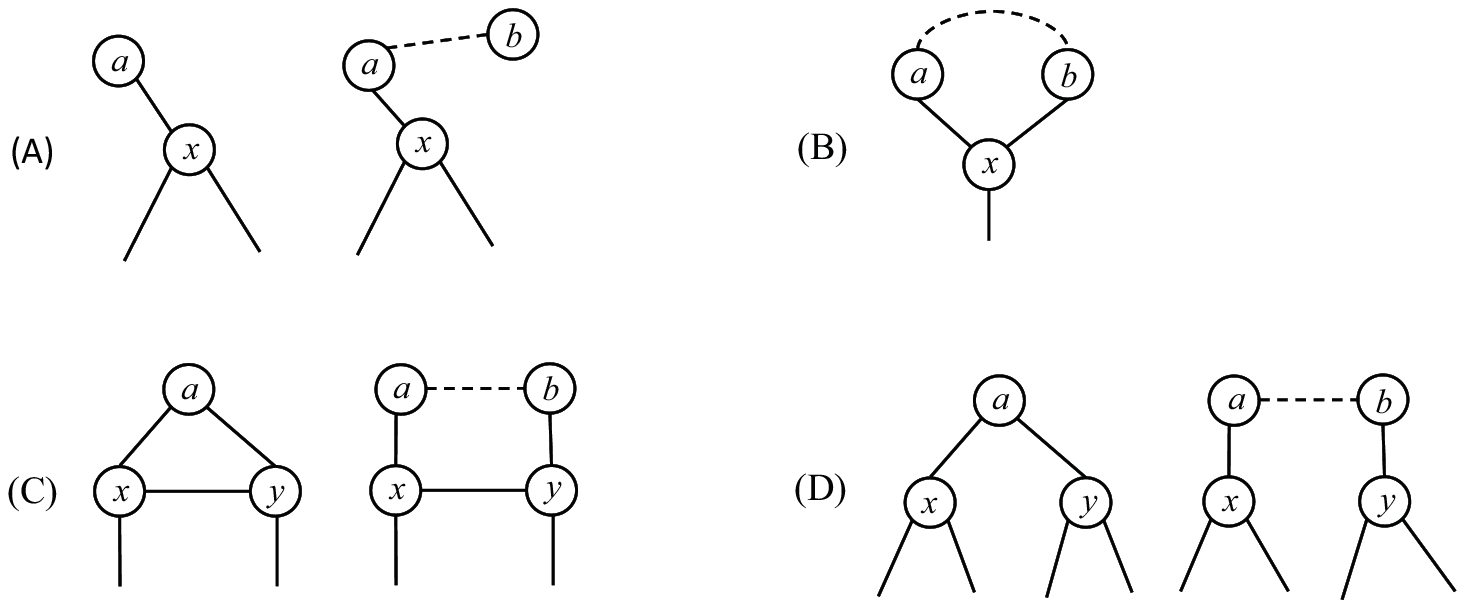}}
\caption{Different structures in the intersection graph with
degree bounded by $3$.} \label{fig_structure}
\end{figure}

In next lemma, we will present how to improve the running time of
algorithm when the degrees of nodes in the intersection graph is
bounded by $3$.

\begin{lemma}\label{lemma3}
Given an instance $(X, {\cal F}, w)$ of the {\sc weighted mutually
exclusive maximum set cover} problem, if the degree of its
intersection graph is bounded by $3$, then the problem can be
solved in $O^*(1.325^m)$ time.
\end{lemma}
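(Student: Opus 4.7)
My plan is to establish both correctness and the $O^*(1.325^m)$ running-time bound for algorithm WMEM-Cover-3. Correctness is essentially the same as for Lemma 2: branching on a chosen vertex $x$ partitions the solution space into covers that contain $x$ and covers that do not, so every mutually exclusive cover is enumerated. For the running time I will bound $T(m)$, the number of leaves of the search tree on an instance with $m$ subsets. If Step 3.3 falls into Step 3.3.1 then Lemma~\ref{lemma2} finishes the instance in polynomial time, so I may assume Step 3.3.2 is reached and that the chosen $x$ has degree exactly $3$ in the intersection graph. The naive two-way branching vector $(1,4)$ only yields root $\approx 1.380$, so to reach $1.325$ I will unfold a second level of branching in the exclude branch, exploiting the degree drops among $x$'s neighbors that the selection rule in Step 3.3.2 is designed to create.

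I case-split on $\delta$, the minimum degree among the neighbors of $x$. If $\delta = 1$, let $y$ be a degree-$1$ neighbor of $x$. Excluding $x$ isolates $y$, so $y$ no longer overlaps any remaining subset and may be greedily inserted into the solution. The exclude branch then reduces $m$ by $2$, giving $T(m) \leq T(m-4) + T(m-2)$, whose root is strictly below $1.325$. If $\delta = 2$, let $y$ be a degree-$2$ neighbor of $x$ with second neighbor $z$. After excluding $x$, the degree of $y$ drops to $1$, and by the selection rule the next branching vertex must be $z$, the unique degree-$3$ vertex adjacent to $y$. Branching on $z$ then produces an include child that deletes $4$ further subsets and an exclude child that deletes $z$ and isolates $y$, which contributes one more deletion. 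Combining both levels yields $T(m) \leq T(m-4) + T(m-5) + T(m-3)$, whose characteristic equation $t^{-3}+t^{-4}+t^{-5}=1$ has root exactly $1.325$. If $\delta = 3$, I further case-split on the subgraph induced by $\{x, y_1, y_2, y_3\}$ using the structural types pictured in Figure~\ref{fig_structure}. When $\{x, y_1, y_2, y_3\}$ forms a $K_4$, the degree-$3$ bound forces this to be the entire connected component, which can be solved in constant time; in every other configuration, excluding $x$ creates at least one neighbor of degree $\leq 2$, reducing the analysis to the $\delta = 1$ or $\delta = 2$ case already handled.

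The main obstacle lies in the $\delta = 3$ case: I must verify that across every non-isomorphic interconnection pattern between $x$, its three neighbors, and the rest of the graph, the two-level unfolded recurrence is dominated by $T(m) \leq T(m-3) + T(m-4) + T(m-5)$. Triangles through $x$ turn out to help, because every additional edge among $y_1, y_2, y_3$ produces an extra degree drop when $x$ is excluded; so the configuration that must be checked most carefully is the triangle-free one in which each $y_i$ has two external degree-$3$ neighbors not shared with any $y_j$, $j \neq i$. Once that configuration is shown to respect the $(3,4,5)$-vector, combining all three cases gives $T(m) \leq 1.325^m$, and since each recursive call does only polynomial work outside the branching, the claimed $O^*(1.325^m)$ bound follows.
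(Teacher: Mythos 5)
Your branching strategy and your critical recurrence $T(m)\leq T(m-3)+T(m-4)+T(m-5)$ (root $\approx 1.3247$, arising from a degree-$2$ neighbor $y$ of $x$ whose second neighbor $z$ has degree $3$) coincide with the paper's worst case, and your $\delta=1$ and $\delta=2$ analyses are sound modulo one benign omission: you assume $z$ has degree $3$, whereas $z$ may itself have degree $\leq 2$ (a longer pendant path, or a path whose both ends attach to $x$); those configurations only improve the recurrence, e.g.\ to $T(m)\leq 2T(m-4)+T(m-5)$, and are exactly the paper's Cases 1 and 2, but you should say so explicitly.

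The genuine gap is in your $\delta=3$ case. You claim that when all three neighbors of $x$ have degree $3$, excluding $x$ ``reduces the analysis to the $\delta=1$ or $\delta=2$ case'' and that the two-level unfolding is still dominated by the $(3,4,5)$ vector. It is not: excluding $x$ drops each $y_i$ only to degree $2$, so the second-level branching is a $\delta=2$ branching, and chaining $T(m)\leq T(m-4)+T(m-1)$ with $T(m-1)\leq T(m-4)+T(m-5)+T(m-6)$ gives $T(m)\leq 2T(m-4)+T(m-5)+T(m-6)$, whose characteristic root is about $1.347>1.325$ (at $t=1.325$ one has $2t^{-4}+t^{-5}+t^{-6}\approx 1.08>1$); unfolding the include branch as well does not rescue the bound. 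The correct way out --- which the paper uses, if tersely --- is to observe that $\delta=3$ can occur only when the current connected component is $3$-regular, and that this happens only at the root of the recursion for each initial component: whenever at least one vertex is deleted from a connected graph, every surviving component contains a vertex adjacent to a deleted one, hence a vertex of degree $\leq 2$, so the selection rule thereafter always finds $x$ with $\delta\leq 2$. The $3$-regular branching is therefore a one-time $T(m)\leq T(m-4)+T(m-1)$ step contributing only a constant factor, and should be kept out of the asymptotic recurrence rather than folded into it. Your $K_4$ observation is correct but does not address the problematic triangle-free $3$-regular configuration you yourself flag; without the ``only at the root'' argument that configuration defeats the claimed bound.
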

\begin{proof}
We suppose that the intersection graph always has a node whose
degree is less than $3$. At the beginning, if the degrees of all
nodes in the intersection graph are $3$, then after the first
branching, both subgraphs will have at least $3$ nodes whose
degrees are at most $2$. After that, when the algorithm makes new
branchings, it is obvious that there are always new nodes whose
degrees will be reduced. Hence, after the first branching, the
intersection graph will always keeps at least one node of degree
bounded by $2$.

The algorithm {\bf WMEM-Cover-3} always first finds a node $x$ of
degree $3$, which is the first node that is connected to a node
with minimum degree (less than $3$) in the intersection graph,
then branches at $x$. We analysis the running time of the
algorithm {\bf WMEM-Cover-3} by considering the following cases.

{\bf Case 1.} The node $x$ is connected by a simple path $P$ that
one end is not connected to any other node (refer to
Figure~\ref{fig_structure}-(A)). In the branch of including $x$
into the solution, $x$ and $3$ neighbors of $x$ will be removed.
In the branch that excludes $x$ from the solution, $x$ is removed;
the simple path $P$ becomes an isolated component and the
sub-instance induced by $P$ can be solved in polynomial time; thus
at least $2$ nodes will be removed in this branch. We obtain the
recurrence relation
\[T(m)\leq T(m-4) + T(m-2),\] which leads to $T(m)\leq 1.273^m$.

{\bf Case 2.} Both ends of the simple path $P$ are connected to
$x$ (refer to Figure~\ref{fig_structure}-(B)), where in this case,
the length of the simple path $P$ is at least $2$. Then in the
branch of including $x$ into the solution, as the case 1, at least
$4$ nodes will be removed and in the branch of excluding $x$ from
the solution, the path $P$ also becomes an isolated component.
Hence, we will have
\[T(m)\leq T(m-4) + T(m-3),\] which leads to $T(m)\leq 1.221^m$.

{\bf Case 3.} One end of the simple path $P$ is connected to $x$
while the other end of $P$ is connected to node $y$ that is not
$x$, where $x$ and $y$ can be or is not connected by an edge
(refer to Figure~\ref{Algorithm_main}-(C)(D)). In the branch that
includes $x$ into the solution, as the above cases, at least $4$
nodes will be removed. In the other branch, after $x$ is removed,
a node of degree one will be generated. If no node(s) of degree
one is in the connected component with nodes of degree $3$, then
node(s) of degree one is/are in connected component(s) bounded
$2$. Hence, we will have
\[T(m)\leq T(m-4) + T(m-2),\] which leads to $T(m)\leq 1.273^m$.
If there is at least one node of degree one is in the connected
component with nodes of degree $3$, then next branching is as the
Case 1. Therefore, even in the worst case, we will have the
recurrence relation
\[T(m)\leq T(m-4) + T(m-1) \leq T(m-4) + (T(m-5) + T(m-3)),\] which leads to $T(m)\leq 1.325^m$.

Above analysis has included all possible situations that a node of
degree at most $2$ is connected to a node of degree $3$. Hence, we
can obtain that the time complexity of the algorithm is
$O^*(1.325^m)$.

As Lemma~\ref{lemma2}, the correctness of the algorithm {\bf
WMEM-Cover-3} is obvious. \qed
\end{proof}

\begin{figure}[htb]
\footnotesize
\begin{tabbing}
xxxx\=xx\=xx\=xx\=xx\=xx\=xx\=xx\=xx\=xx\=xx\=xx\=xx\=\kill
{\bf WMEM-Cover-main$((X,{\cal F},w))$}\\
\textbf{Input:} An instance of the {\sc weighted mutually exclusive maximum set cover} problem. \\
\textbf{Output:} A mutually exclusive maximum set cover with minimum weight.\\

1 \> {\bf if} $X=\emptyset$ or ${\cal F} = \emptyset$ {\bf then}\\
1.1 \>\> {\bf return} $\emptyset$;\\
2 \> Find all connected components in the intersection graph and save them to $Comp$;\\
3.\> {\bf if} the number of components is larger than $2$ {\bf then} \\
3.1 \>\>\ {\bf return} $\bigcup_{C \in Comp}$ {\bf WMEM-Cover-main$((X,{\cal F},w)_C)$}; \\
\> {\bf else}\\
3.2 \>\> find subset $S$ with maximum degree in the intersection graph;\\
3.3 \>\> {\bf if} the degree of $S$ is at most $3$ {\bf then}\\
3.3.1 \>\>\> {\bf return WMEM-Cover-3$((X,{\cal F},w))$};\\
\>\> {\bf else}\\
3.3.2 \>\>\> find node $x$ that has the maximum degree in the
intersection graph;\\
3.3.3 \>\>\> $Solution_1 = \{x\} \cup$ {\bf WMEM-Cover-main$((X-x,{\cal F}-neighbor(x),w))$;}\\
3.3.4 \>\>\> $Solution_2 = $ {\bf WMEM-Cover-main$((X,{\cal F}-x,w))$};\\
3.3.5 \>\>\> {\bf return} the best solution among $Solution_1$ and
$Solutoin_2$;

\end{tabbing}
\vspace*{-3mm}
\caption{ The main algorithm for the {\sc weighted mutually
exclusive maximum set cover} problem.} \label{Algorithm_main}
\end{figure}

Next, we will present the main result.

\begin{theorem}\label{theorem2}
The {\sc weighted mutually exclusive maximum set cover} problem
can be solved in $O^*(1.325^m)$ time.
\end{theorem}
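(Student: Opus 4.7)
The plan is to combine the three lemmas with a straightforward branching analysis on the remaining case where the intersection graph has a node of degree at least $4$. By Lemma~\ref{lemma1}, we may assume the intersection graph is connected, since otherwise we solve each connected component independently and take the union, and the running time on a disjoint union of components of sizes $m_1,\dots,m_t$ with $\sum m_i = m$ is dominated by $1.325^{m_1}+\cdots+1.325^{m_t}\leq 1.325^m$ up to a polynomial factor. By Lemma~\ref{lemma3}, if the maximum degree of the intersection graph is at most $3$, the algorithm \textbf{WMEM-Cover-3} already solves the instance in $O^*(1.325^m)$ time. Hence the only remaining case is when the intersection graph has a node $x$ of degree at least $4$, which is exactly what \textbf{WMEM-Cover-main} selects in step 3.3.2.

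Next I would analyze the branching at such an $x$. In $Solution_1$, the node $x$ is taken into the cover, so $x$ together with all nodes adjacent to $x$ must be removed from ${\cal F}$; since $\deg(x)\geq 4$, this removes at least $5$ subsets. In $Solution_2$, the node $x$ is discarded, removing at least $1$ subset. Letting $T(m)$ denote the number of leaves in the search tree, this yields the recurrence
\[
T(m)\leq T(m-5)+T(m-1),
\]
whose characteristic equation is $t^5 - t^4 - 1 = 0$. A direct substitution shows its unique positive root is approximately $1.3247 < 1.325$, so $T(m)\leq 1.325^m$.

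Finally I would knit these pieces together by induction on $m$: the base cases $X=\emptyset$ or ${\cal F}=\emptyset$ are handled in step 1.1 in polynomial time, the connected-component split in step 3.1 preserves the bound by the observation above, the call to \textbf{WMEM-Cover-3} in step 3.3.1 preserves it by Lemma~\ref{lemma3}, and the branching in steps 3.3.3--3.3.5 contributes the recurrence just derived. The work done at each node of the search tree outside the recursive calls is polynomial (finding connected components, locating a maximum-degree node, and carrying out set removals), so the total time is $O^*(1.325^m)$. Correctness follows in the same way as in Lemma~\ref{lemma2} and Lemma~\ref{lemma3}: the algorithm exhaustively considers, for the selected node $x$, both the ``include'' and ``exclude'' branches, which jointly span every feasible mutually exclusive cover, and returns the best option according to the comparison rule defined before Lemma~\ref{lemma1}.

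The only delicate point, and the one I would double-check carefully, is that the recurrence constant $1.325$ really upper bounds the positive root of $t^5 - t^4 - 1=0$ and that no sharper branching case is hidden at high-degree nodes that would force a worse constant; since the include-branch always removes at least $\deg(x)+1\geq 5$ subsets when $\deg(x)\geq 4$, and the exclude-branch removes at least one, no such hidden case arises, and the bound $O^*(1.325^m)$ is matched exactly by the degree-$3$ analysis of Lemma~\ref{lemma3}, so the overall running time remains $O^*(1.325^m)$.
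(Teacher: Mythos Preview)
Your proposal is correct and follows essentially the same approach as the paper: branch on a maximum-degree node, use the recurrence $T(m)\le T(m-(d+1))+T(m-1)$ with $d\ge 4$ to get a root below $1.325$, and defer the case $d\le 3$ to Lemma~\ref{lemma3}. Your write-up is in fact a bit more explicit than the paper's (you spell out the characteristic equation, the component-split bound, and the induction), but the underlying argument is identical.
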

\begin{proof}
As above lemmas, the correctness of the algorithm {\bf
WMEM-Cover-main} is trivial. We only prove the running time of the
algorithm.

The algorithm {\bf WMEM-Cover-main} always keeps searching the
node $x$ with maximum degree in the intersection graph. Then
branches on it. If the degree of $x$ is $d$, then we will obtain
the recurrence relation
\[T(m) \leq T(m-(d+1)) + T(m-1).\]
Furthermore, if $d\leq 3$, the algorithm {\bf WMEM-Cover-main}
will call the algorithm {\bf WMEM-Cover-3}. Hence $d \geq 4$ for
the branching in the algorithm of {\bf WMEM-Cover-main}, which
lead to $T(m)\leq 1.325^m$. From Lemma~\ref{lemma3}, if $d\leq 3$,
we also have $T(m)\leq 1.325^m$. Therefore, the overall running
time of the algorithm {\bf WMEM-Cover-main} is $O^*(1.325^m)$.
 \qed
\end{proof}

\section{Conclusion and future works}
In this paper, first we proved that the {\sc weighted mutually
exclusive maximum set cover} problem is NP-hard. Then we designed
the first non-trivial algorithm that uses $m$, the number of
subsets in ${\cal F}$ as parameter, for the problem. In our
algorithm, we created an intersection graph that can easily help
us to find branching subsets that can greatly reduce the time
complexity of the algorithm. The running time of our algorithm is
$O^*(1.325^m)$, which can easily finish the computation in the
application of finding driver mutations if $m$ is less than $100$.

By choosing the branching subsets more carefully, we believe that
the running time of the algorithm can be further improved. While
reducing the running time to solve the {\sc weighted mutually
exclusive maximum set cover} problem, which has important
applications in cancer study, is appreciated, another variant of
the problem should be particularly paid attention to. Strict
mutual exclusivity is the extreme case, some tumors may have more
than one perturbation to the pathway. Hence, we need to relax the
strict mutual exclusivity and modify the problem to the {\sc
weighted small overlapped maximum set cover problem}, which allow
each tumor to be covered by a small number (such as 2 or 3) of
mutations. This is another important problem, which is abstracted
from the cancer study, need to be solved.


\end{document}